\newtheorem{theorem}{Theorem}
\newtheorem{corollary}{Corollary}
\newcommand{\be}{\begin{equation}}
\newcommand{\ee}{\end{equation}}
\newcommand{\bse}{\begin{subequations}}
\newcommand{\ese}{\end{subequations}}
\newcommand{\bea}{\begin{eqnarray}}
\newcommand{\eea}{\end{eqnarray}}
\newcommand{\ba}{\begin{array}}
\newcommand{\ea}{\end{array}}
\newcommand{\bc}{\begin{center}}
\newcommand{\ec}{\end{center}}
\begin{document}

\title{\bf  ER = EPR is an operational theorem}

\author{{Chris Fields$^1$, James F. Glazebrook$^2$, Antonino Marcian\`{o}$^3$ and Emanuele Zappala$^4$}\\ \\
{\it$^1$ Allen Discovery Center}\\
{\it Tufts University, Medford, MA 02155 USA}\\
{fieldsres@gmail.com} \\
{ORCID: 0000-0002-4812-0744} \\ \\
{\it$^2$ Department of Mathematics and Computer Science,} \\
{\it Eastern Illinois University, Charleston, IL 61920 USA} \\
{\it and} \\
{\it Adjunct Faculty, Department of Mathematics,}\\
{\it University of Illinois at Urbana-Champaign, Urbana, IL 61801 USA}\\
{jfglazebrook@eiu.edu}\\
{ORCID: 0000-0001-8335-221X}\\ \\
{\it$^3$ 
Center for Astronomy and Astrophysics} \\
{\it \& 
Center for Field Theory and Particle Physics } \\
{\it \& Department of Physics, Fudan University, Shanghai, CHINA} \\
{\it and} \\
{\it Laboratori Nazionali di Frascati INFN, Frascati (Rome), Italy, EU
} \\
{\it and} \\
{\it INFN sezione di Roma ``Tor Vergata'', 00133 Rome, Italy, EU}\\
{marciano@fudan.edu.cn} \\
{ORCID: 0000-0003-4719-110X} \\ \\
{\it$^4$ Department of Mathematics and Statistics,} \\
{\it Idaho State University, Pocatello, ID 83209, USA} \\
{emanuelezappala@isu.edu} \\
{ORCID: 0000-0002-9684-9441} \\ \\
}



\date{\today}

\maketitle

{\bf Abstract:} 
We show that in the operational setting of a two-agent, local operations, classical communication (LOCC) protocol, Alice and Bob cannot operationally distinguish monogamous entanglement from a topological identification of points in their respective local spacetimes, i.e. that ER = EPR can be recovered as an operational theorem. Our construction immediately implies that in this operational setting, the local topology of spacetime is observer-relative.  It also provides a simple demonstration of the non-traversability of ER bridges.  As our construction does not depend on an embedding geometry, it generalizes previous geometric approaches to ER = EPR.

\section{Introduction}
\noindent 
Maldecena and Susskind \cite{maldecena:13} conjectured, using geometric arguments in AdS/CFT applied to black holes, that a wormhole, i.e. an Einstein-Rosen (ER) bridge, is equivalent to a pair of maximally Einstein-Podolski-Rosen (EPR) entangled black holes, i.e. that ER=EPR (see also \cite{susskind:16}). 
Here we show that two distinct observers cannot operationally distinguish, by independent local manipulations and measurements, monogamous entanglement from a topological identification of points in their respective local spacetimes. In other words, in a two-agent setting it is not possible to operationally distinguish ER from EPR.  Our result represents a rigorous derivation of ER=EPR as an operational theorem; furthermore, it provides a  generalization of ER=EPR, with no necessity of geometric embedding, and a demonstration of the non-traversability of ER bridges.

The relevance of ER=EPR lies in its possible resolution to the AMPS firewall paradox, as introduced in \cite{Almheiri:2012rt}. The eventual appearance of a firewall can be traced to the flow of energy/matter among the two distant black holes. But the appearance of the firewall inside the event horizon prohibits sending any external superluminal signal.  Maldacena and Susskind showed that pair production of charged black holes in a background magnetic field triggers their entanglement, and that after Wick rotation these behave as a wormhole.  They extended then their conjecture so as to comprise not only entangled pairs of black holes, but also any possible entangled pairs of particles. 
These particles would be also connected by wormholes, at higher (Planckian) energy scales. This opened the pathway to a fascinating interpretation of the conjecture, that spacetime geometry and entanglement are interwoven, hence substantiating a previous observation by Van Raamsdonk 
\cite{VanRaamsdonk:2010pw}. 

From an operational perspective, measurements of entangled pairs implement local operations, classical communication (LOCC) protocols \cite{chitambar2014everything}. This is clear in a canonical Bell/EPR experiment, where the agents Alice and Bob must agree, via classical communication, to employ specified detectors in specified ways, and must later exchange their accumulated data (or transfer it to some 3rd party) in the form of classical records.  It has been shown in \cite{fgm:22a} that sequentially-repeated state preparations and/or measurements that employ mutually commuting quantum reference frames (QRFs \cite{bartlett:07}), e.g. the sequentially repeated preparations and/or measurements of position and spin during a Bell/EPR experiment, can be represented, without loss of generality, by topological quantum field theories (TQFTs \cite{atiyah:88}). It was then shown in \cite{fgm:24a} that any two-agent LOCC protocol can be represented by Diagram \eqref{locc-diag}, in which the agents Alice and Bob are i) mutually separable, and hence conditionally statistically independent, ii) are separated from their joint environment $E$ by a holographic screen $\mathscr{B}$, iii) implement read/write QRFs $Q_A$ and $Q_B$, respectively, and iv) communicate via classical and quantum channels implemented by $E$.

Intuitively, Alice and Bob are connected by a quantum channel if there exist distinct (collections of) qubits $q_A$ and $q_B$ accessible only to Alice and Bob, respectively, and $|q_Aq_B\rangle \neq |q_A\rangle |q_B\rangle$.  Alice and Bob are connected by a classical channel if there exist distinct (collections of) qubits $q^{\prime}_A$ and $q^{\prime}_B$ accessible only to Alice and Bob, respectively, and causal processes $f$ and $g$ implemented by $E$ such that $q^{\prime}_B = f(q^{\prime}_A)$ and $q^{\prime}_A = g(q^{\prime}_B)$.  Here ``causal'' means causal with respect to clocks in $E$. Alice and Bob being distinct, mutually separable agents that communicate classically --- i.e. causally --- entails that the QRFs $Q_A$ and $Q_B$ do not commute, and hence that LOCC protocols exhibit quantum contextuality
%
\cite{fg:21,fg:23}. We can, therefore, consider Alice and Bob to also be separated by a boundary, which is elided in Diagram \eqref{locc-diag} to emphasize their joint interaction with $E$, and to have access only to distinct, non-overlapping sectors of $\mathscr{B}$.  In practical Bell/EPR experiments, the separation of Alice and Bob is effected by their spacelike separation in a laboratory coordinate system; in the notation of Diagram \eqref{locc-diag}, this corresponds to the classical channel being timelike, i.e. causal, in the laboratory (i.e. $E$) coordinate system.

\begin{equation} \label{locc-diag}
\begin{gathered}
\scalebox{0.9}{
\begin{tikzpicture}[every tqft/.append style={transform shape}]
\draw[rotate=90] (0,0) ellipse (2.8cm and 1 cm);
\node[above] at (0,1.9) {$\mathscr{B}$};
\draw [thick] (-0.2,1.9) arc [radius=1, start angle=90, end angle= 270];
\draw [thick] (-0.2,1.3) arc [radius=0.4, start angle=90, end angle= 270];
\draw[rotate=90,fill=green,fill opacity=1] (1.6,0.2) ellipse (0.3 cm and 0.2 cm);
\draw[rotate=90,fill=green,fill opacity=1] (0.2,0.2) ellipse (0.3 cm and 0.2 cm);
\draw [thick] (-0.2,-0.3) arc [radius=1, start angle=90, end angle= 270];
\draw [thick] (-0.2,-0.9) arc [radius=0.4, start angle=90, end angle= 270];
\draw[rotate=90,fill=green,fill opacity=1] (-0.6,0.2) ellipse (0.3 cm and 0.2 cm);
\draw[rotate=90,fill=green,fill opacity=1] (-2.0,0.2) ellipse (0.3 cm and 0.2 cm);
\draw [rotate=180, thick, dashed] (-0.2,0.9) arc [radius=0.7, start angle=90, end angle= 270];
\draw [rotate=180, thick, dashed] (-0.2,0.3) arc [radius=0.1, start angle=90, end angle= 270];
\draw [thick] (-0.2,0.5) -- (0,0.5);
\draw [thick] (-0.2,-0.1) -- (0,-0.1);
\draw [thick] (-0.2,-0.9) -- (0,-0.9);
\draw [thick] (-0.2,-0.3) -- (0,-0.3);
\draw [thick, dashed] (0,0.5) -- (0.2,0.5);
\draw [thick, dashed] (0,-0.1) -- (0.2,-0.1);
\draw [thick, dashed] (0,-0.9) -- (0.2,-0.9);
\draw [thick, dashed] (0,-0.3) -- (0.2,-0.3);
\node[above] at (-3,1.7) {Alice};
\node[above] at (-3,-1.7) {Bob};
\node[above] at (2.8,1.7) {$E$};
\draw [ultra thick, white] (-0.9,1.5) -- (-0.7,1.5);
\draw [ultra thick, white] (-1,1.3) -- (-0.8,1.3);
\draw [ultra thick, white] (-1,1.1) -- (-0.8,1.1);
\draw [ultra thick, white] (-1,0.9) -- (-0.8,0.9);
\draw [ultra thick, white] (-1.1,0.7) -- (-0.8,0.7);
\draw [ultra thick, white] (-1.1,0.5) -- (-0.8,0.5);
\draw [ultra thick, white] (-1,-0.9) -- (-0.8,-0.9);
\draw [ultra thick, white] (-1,-1.1) -- (-0.8,-1.1);
\draw [ultra thick, white] (-1,-1.3) -- (-0.8,-1.3);
\draw [ultra thick, white] (-0.9,-1.5) -- (-0.7,-1.5);
\draw [ultra thick, white] (-0.9,-1.7) -- (-0.7,-1.7);
\draw [ultra thick, white] (-0.8,-1.9) -- (-0.6,-1.9);
\draw [ultra thick, white] (-0.8,-2.1) -- (-0.6,-2.1);
\node[above] at (-1.3,1.4) {$Q_A$};
\node[above] at (-1.3,-2.4) {$Q_B$};
\draw [rotate=180, thick] (-0.2,2.3) arc [radius=2.1, start angle=90, end angle= 270];
\draw [rotate=180, thick] (-0.2,1.7) arc [radius=1.5, start angle=90, end angle= 270];
\draw [thick] (-0.2,1.9) -- (0,1.9);
\draw [thick] (-0.2,1.3) -- (0,1.3);
\draw [thick, dashed] (0.2,1.9) -- (0,1.9);
\draw [thick, dashed] (0.2,1.3) -- (0,1.3);
\draw [thick] (-0.2,-1.7) -- (0,-1.7);
\draw [thick] (-0.2,-2.3) -- (0,-2.3);
\draw [thick, dashed] (0.2,-1.7) -- (0,-1.7);
\draw [thick, dashed] (0.2,-2.3) -- (0,-2.3);
\draw [ultra thick, white] (0.3,2) -- (0.3,1.2);
\draw [ultra thick, white] (0.5,2) -- (0.5,1.2);
\draw [ultra thick, white] (0.7,1.9) -- (0.7,1.1);
\draw [ultra thick, white] (0.3,-2.4) -- (0.3,-1.5);
\draw [ultra thick, white] (0.5,-2.4) -- (0.5,-1.5);
\draw [ultra thick, white] (0.7,-1.8) -- (0.7,-1.5);
\node[above] at (4.5,-2.4) {Quantum channel};
\draw [thick, ->] (2.9,-2) -- (0.7,-0.8);
\node[above] at (4.5,-1.4) {Classical channel};
\draw [thick, ->] (2.9,-0.9) -- (2.3,-0.6);
\end{tikzpicture}
}
\end{gathered}
\end{equation}

The tokens via which Alice and Bob classically communicate --- e.g. modulations of the ambient photon field --- comprise degrees of freedom of $E$ and are therefore quantum systems that must be measured to extract state information \cite{tipler:14}. The distinction between classical and quantum channels is, therefore, an a priori assumption that allows defining LOCC, not an observational outcome of either Alice's or Bob's interactions with $E$. Nonetheless, a LOCC protocol permits Alice and Bob to determine via observation and classical communication if a quantum channel is shared between them. This is empirically manifest in Bell/EPR experiments, and relied upon theoretically when treating quantum Darwinism \cite{zurek:09} as enabling a ``public'' quantum-to-classical transition \cite{fgm:24a}. 

The detection of monogamous entanglement, i.e. entanglement reaching the relevant Tsirelson bound \cite{tsirelson:80} for violation of the relevant Bell inequality, requires a perfectly decoherence-free quantum channel traversing $E$. Thus, we can also state our result as showing that Alice and Bob cannot operationally distinguish a perfectly decoherence-free quantum channel traversing $E$ from a topological identification of points in their respective measured spacetimes.  This result is demonstrated below using a purely topological argument, and is independent of the geometry and coordinates (i.e. spatial QRFs) employed by either Alice or Bob to describe either the boundary $\mathscr{B}$ or the bulk $E$.

A number of criticisms of ER = EPR on geometric grounds \cite{dai:20} become, in this setting, constraints on Alice's and Bob's abilities to measure and communicate about spacetime, i.e. constraints on the QRFs they employ to measure space and time.  As entanglement is known to be observer/description/QRF-relative \cite{zanardi:00, zanardi:04}, our result immediately implies that the local topology of spacetime is observer/description/QRF-relative.

\section{LOCC protocols}

\subsection{Quantum instruments and quantum channels} 
\noindent 
We first briefly recall the definition of LOCC, closely following \cite{chitambar2014everything}. Recall that given a quantum system with corresponding Hilbert space $\mathcal{H}$, a {\it quantum instrument} is a family of completely positive maps $\mathcal E_j : B(\mathcal{H}) \longrightarrow B(\mathcal{H})$, where each $\mathcal E_j$ is a bounded linear map on the space of bounded linear operators $B(\mathcal{H})$ of $\mathcal{H}$. Additionally, it is required that the sum $\sum_j \mathcal E_j$ is trace preserving, and there are countably (possibly finitely many) indices $j$.
The set of quantum instruments is in one-to-one correspondence with the set of quantum-classical maps, defined as trace-preserving completely positive (TCP) maps $B(\mathcal{H}) \longrightarrow B(\mathcal{H})\otimes B(\mathbb C^\Theta)$ of the form $\rho \mapsto \sum_j \mathcal E_j(\rho) \otimes \ket{j}\bra{j}$, where $\Theta$ denotes the index set of quantum instruments. This leads to characterizing the actual nature of a quantum channel. 

In two-agent protocols as depicted in Diagram \eqref{locc-diag} above, $j = 2$ and each of the QRFs $Q_A$ and $Q_B$ corresponds to the composition of a quantum instrument acting on the relevant boundary sector of the quantum channel with a quantum-classical map acting on the relevant boundary sector of the classical channel. 
An $n$-partite system has a corresponding Hilbert space given by the tensor product $\mathcal{H} = \mathcal{H}^{k_1}\otimes \cdots \otimes \mathcal{H}^{k_n}$, where $\mathcal{H}^{\otimes k_j}$ represents the reduced state space of the $j^{\rm th}$ component of the system. In this case, an instrument $\{\mathcal F_j\}_{j\in \Theta}$ is one way local with respect to the $k^{\rm th}$ component if we have $\mathcal F_j = \bigotimes_{j\neq k} \mathcal T_j\otimes \mathcal E_k$, where $\mathcal E_k$ is a completely positive map while each $\mathcal T_j$ is TCP. This operation represents party $k$ applying a quantum instrument $\mathcal E_i$ and then transmitting the outcome to all other parties classically. 

\subsection{Implementing LOCCs} 
\noindent
An instrument $\mathfrak I'$ is LOCC linked to $\mathfrak I$ if $\mathfrak I'$ is a coarse-graining of $\mathfrak I$, where coarse-graining is, roughly speaking, a procedure of grouping instruments in $\mathfrak{I}$ in a compatible way (see \cite{chitambar2014everything}). Then, LOCCs are defined recursively. One says that $\mathfrak I$ is ${\rm LOCC}_1$ if it is local with respect to some party. One says that $\mathfrak I$ is ${\rm LOCC}_n$ if it is LOCC linked to some $\mathfrak J$ which is ${\rm LOCC}_{n-1}$. {\rm LOCC} is defined as the direct limit of a system of ${\rm LOCC}_n$ instruments.  For the present purposes, only two-observer, i.e. ${\rm LOCC}_2$ protocols need be considered.  In the language of \cite{fgm:24a} and hence of Diagram \eqref{locc-diag}, this LOCC linking corresponds to the requirement that only mutually-commuting QRFs, e.g. position and spin in a Bell/EPR experiment, can be ``co-deployed'' by being combined into a single effective QRF. Formally, combining QRFs is simply concatenating commuting operators.  Any QRF can be represented as an operator by a Cone-CoCone Diagram (CCCD) of commuting component operators that collectively implement a logically-gated, distributed information flow, and out of which the TQFT is constructed (see \cite{fgm:22a,fgm:24a}, and references therein for details).  Concatenating commuting QRFs is then concatenating CCCDs, which commute by definition if the operators they represent commute.

\section{Bell/EPR experiments}
\noindent
Let us proceed to explain matters in terms of a canonical Bell/EPR experiment. In such an experiment, Alice and Bob can only detect entanglement if they each have a free choice of measurement basis, i.e. only if they can each independently choose the instrument settings to employ for each measurement.  If this free choice requirement is violated by some form of superdeterminism (or ``conspiracy''), Alice and Bob are themselves effectively entangled and cannot function as two independent observers.  It is also, clearly, nonsensical to talk about ``classical communication'' if Alice and Bob cannot be considered separate systems.  Hence formally, separability of the joint state, $|AB \rangle = |A \rangle |B \rangle$ (or of the joint density $\rho_{AB} = \rho_A \rho_B$) is a requirement of LOCC, and hence of operational access to a shared quantum channel \cite{fgm:24a}.

In Diagram \eqref{locc-diag}, both classical and quantum channels are implemented by $E$ and hence are, in general, exposed to interaction with other qubits contained within $E$. 
Interactions between the degrees of freedom of $E$ that implement the channels and other, non-channel degrees of freedom of $E$ implement eavesdropping or noise injection in the classical channel and decoherence in the quantum channel. In both cases, the coupling of channel to non-channel degrees of freedom is a quantum interaction, and its result is to transform pure states of channel degrees of freedom into mixed states.  If such mixing occurs, observations of the channel states, whether classical or quantum, will be characterized by stochastic noise, with uniform noise spectra in the limit of maximally mixed states. Alice and Bob have no direct observational access to such mixing interactions, which occur entirely within $E$ and are implemented by the internal interaction Hamiltonian $H_E$. The risk of decoherence can be minimized by minimizing the exposure of the quantum channel to the rest of $E$. 
Formally, such procedures correspond not to tracing out degrees of freedom of $E$ but to setting their interaction with the channel degrees of freedom to zero.  

\section{ER = EPR}
\subsection{Alice and Bob interacting via a quantum channel}
\noindent 
Consider now a quantum channel in the limit of zero decoherence.  In an idealized Bell/EPR experiment, this corresponds to monogamous pairwise entanglement, and hence to a Bell inequality violation reaching the Tsirelson bound, provided Alice and Bob consistently choose settings $45\degree$ apart.  Setting decoherence equal to zero is requiring that there no interaction between channel and non-channel degrees of freedom of $E$.  Hence the limit of zero decoherence in the quantum channel is reached as the interaction, in $E$, between channel and non-channel degrees of freedom approaches zero.  Taking $q_A$ and $q_B$ to be the (collections of) qubits accessible exclusively to Alice and Bob, respectively as above, the limit of zero decoherence is reached when the state of the quantum channel is simply the pure state $|q_Aq_B \rangle$.

We can assume, without loss of generality, that Alice and Bob interact directly with the qubits $q_A$ and $q_B$ to which they have exclusive access.  We can, therefore, assume that these qubits are localized to Alice's and Bob's respective sectors of the boundary $\mathscr{B}$.  The pure state $|q_Aq_B \rangle$ is, in this case, an entangled state of $\mathscr{B}$.  

This can be accomplished as shown in Diagram \eqref{top-trans}, where for convenience the boundary $\mathscr{B}$ is depicted edge-on. The operations on $\mathscr{B}$ depicted from left to right in Diagram \eqref{top-trans} do not change the topological relationships between Alice and Bob or between either Alice or Bob and $E$, and do not change $E$; they merely take the number of degrees of freedom of $E$, required to implement the quantum channel to zero. The joint state $|AB \rangle$ remains separable, as is required for Alice and Bob to have free choice of basis (i.e. to be regarded as experimenters) and for their interaction to remain LOCC. We assume that the depicted operations on $\mathscr{B}$ constitute a homotopy of the quantum channel which leaves the rest of the space fixed, i.e. it is the identity map outside of the quantum channel. So, in particular, we assume to be in the situation where the instruments realizing $Q_A$ and $Q_B$ are unchanged during this operation.  

\begin{equation} \label{top-trans}
\begin{gathered}
\scalebox{0.8}{
\begin{tikzpicture}[every tqft/.append style={transform shape}]
\draw [ultra thick] (-0.2,2.5) -- (-0.2,1.9);
\draw [ultra thick] (-0.2,1.3) -- (-0.2,0.5);
\draw [ultra thick] (-0.2,-0.1) -- (-0.2,-0.3);
\draw [ultra thick] (-0.2,-0.9) -- (-0.2,-1.9);
\draw [ultra thick] (-0.2,-2.3) -- (-0.2,-2.9);
\node[above] at (-0.2,2.5) {$\mathscr{B}$};
\draw [thick] (-0.2,1.9) arc [radius=1, start angle=90, end angle= 270];
\draw [thick] (-0.2,1.3) arc [radius=0.4, start angle=90, end angle= 270];
\draw[rotate=90,fill=green,fill opacity=1] (1.6,0.2) ellipse (0.3 cm and 0.2 cm);
\draw[rotate=90,fill=green,fill opacity=1] (0.2,0.2) ellipse (0.3 cm and 0.2 cm);
\draw [thick] (-0.2,-0.3) arc [radius=1, start angle=90, end angle= 270];
\draw [thick] (-0.2,-0.9) arc [radius=0.4, start angle=90, end angle= 270];
\draw[rotate=90,fill=green,fill opacity=1] (-0.6,0.2) ellipse (0.3 cm and 0.2 cm);
\draw[rotate=90,fill=green,fill opacity=1] (-2.0,0.2) ellipse (0.3 cm and 0.2 cm);
\draw [rotate=180, thick] (-0.2,0.9) arc [radius=0.7, start angle=90, end angle= 270];
\draw [rotate=180, thick] (-0.2,0.3) arc [radius=0.1, start angle=90, end angle= 270];
\draw [thick] (-0.2,0.5) -- (0,0.5);
\draw [thick] (-0.2,-0.1) -- (0,-0.1);
\draw [thick] (-0.2,-0.9) -- (0,-0.9);
\draw [thick] (-0.2,-0.3) -- (0,-0.3);
\draw [thick] (0,0.5) -- (0.2,0.5);
\draw [thick] (0,-0.1) -- (0.2,-0.1);
\draw [thick] (0,-0.9) -- (0.2,-0.9);
\draw [thick] (0,-0.3) -- (0.2,-0.3);
\node[above] at (-2,0.8) {Alice};
\node[above] at (-2,-1.5) {Bob};
\node[above] at (2.5,1.3) {$E$};
\node[above] at (-1.4,1.4) {$Q_A$};
\node[above] at (-1.4,-2.4) {$Q_B$};
\draw [rotate=180, thick] (-0.2,2.3) arc [radius=2.1, start angle=90, end angle= 270];
\draw [rotate=180, thick] (-0.2,1.7) arc [radius=1.5, start angle=90, end angle= 270];
\draw [thick] (-0.2,1.9) -- (0,1.9);
\draw [thick] (-0.2,1.3) -- (0,1.3);
\draw [thick] (0.2,1.9) -- (0,1.9);
\draw [thick] (0.2,1.3) -- (0,1.3);
\draw [thick] (-0.2,-1.7) -- (0,-1.7);
\draw [thick] (-0.2,-2.3) -- (0,-2.3);
\draw [thick] (0.2,-1.7) -- (0,-1.7);
\draw [thick] (0.2,-2.3) -- (0,-2.3);
\node[above] at (2.2,-3.0) {Quantum channel};
\draw [thick, ->] (2.3,-2.4) -- (0.7,-0.8);
\node[above] at (2.2,2.5) {Classical channel};
\draw [thick, ->] (2.1,2.4) -- (1.8,1.4);
\draw [ultra thick, ->] (3.3,-0.2) -- (4.3,-0.2);
\draw[rotate=90,fill=green,fill opacity=1] (1.6,-6.8) ellipse (0.3 cm and 0.2 cm);
\draw[fill=green,fill opacity=1] (5.7,0.2) ellipse (0.3 cm and 0.2 cm);
\draw[fill=green,fill opacity=1] (5.7,-0.5) ellipse (0.3 cm and 0.2 cm);
\draw[rotate=90,fill=green,fill opacity=1] (-2.0,-6.8) ellipse (0.3 cm and 0.2 cm);
\draw [rotate=180, thick] (-6.8,2.3) arc [radius=2.1, start angle=90, end angle= 270];
\draw [rotate=180, thick] (-6.8,1.7) arc [radius=1.5, start angle=90, end angle= 270];
\draw [thick] (6.8,1.9) arc [radius=1.4, start angle=90, end angle= 180];
\draw [thick] (6.8,1.3) arc [radius=0.8, start angle=90, end angle= 180];
\draw [rotate=90,thick] (-0.9,-5.4) arc [radius=1.4, start angle=90, end angle= 180];
\draw [rotate=90,thick] (-0.9,-6.0) arc [radius=0.8, start angle=90, end angle= 180];
\draw [thick] (6.0,0.5) -- (6.0,-0.9);
\draw [thick] (5.4,0.5) -- (5.4,-0.9);
\draw [ultra thick] (6.8,2.5) -- (6.8,1.9);
\draw [rounded corners, ultra thick] (6.8,1.3) -- (6.8,0.4) -- (6.6,0.3) -- (6.0,0.2);
\draw [rounded corners, ultra thick] (5.4,0.2) -- (4.8,0.0) -- (4.7,-0.2) -- (4.8,-0.4) -- (5.4,-0.5);
\draw [rounded corners, ultra thick] (6.0,-0.5) -- (6.6,-0.6) -- (6.8,-0.7) -- (6.8,-1.7);
\draw [ultra thick] (6.8,-2.3) -- (6.8,-2.9);
\node[above] at (6.8,2.5) {$\mathscr{B}$};
\node[above] at (4.5,0.8) {Alice};
\node[above] at (4.5,-1.5) {Bob};
\node[above] at (8.5,1.3) {$E$};
\node[above] at (5.3,1.4) {$Q_A$};
\node[above] at (5.3,-2.4) {$Q_B$};
\draw [ultra thick, ->] (9.3,-0.2) -- (10.3,-0.2);
\draw[rotate=90,fill=green,fill opacity=1] (1.6,-12.8) ellipse (0.3 cm and 0.2 cm);
\draw[fill=green,fill opacity=1] (11.7,-0.1) ellipse (0.3 cm and 0.2 cm);
\draw[rotate=90,fill=green,fill opacity=1] (-2.0,-12.8) ellipse (0.3 cm and 0.2 cm);
\draw [rotate=180, thick] (-12.8,2.3) arc [radius=2.1, start angle=90, end angle= 270];
\draw [rotate=180, thick] (-12.8,1.7) arc [radius=1.5, start angle=90, end angle= 270];
\draw [thick] (12.8,1.9) arc [radius=1.4, start angle=90, end angle= 180];
\draw [thick] (12.8,1.3) arc [radius=0.8, start angle=90, end angle= 180];
\draw [rotate=90,thick] (-0.9,-11.4) arc [radius=1.4, start angle=90, end angle= 180];
\draw [rotate=90,thick] (-0.9,-12.0) arc [radius=0.8, start angle=90, end angle= 180];
\draw [thick] (12.0,0.5) -- (12.0,-0.9);
\draw [thick] (11.4,0.5) -- (11.4,-0.9);
\draw [ultra thick] (12.8,2.5) -- (12.8,1.9);
\draw [rounded corners, ultra thick] (12.8,1.3) -- (12.8,0.1) -- (12.6,0.0) -- (12.0,-0.1);
\draw [ultra thick] (11.4,-0.1) -- (10.7,-0.1);
\draw [rounded corners, ultra thick] (12.0,-0.1) -- (12.6,-0.2) -- (12.8,-0.3) -- (12.8,-1.7);
\draw [ultra thick] (12.8,-2.3) -- (12.8,-2.9);
\node[above] at (12.8,2.5) {$\mathscr{B}$};
\node[above] at (10.5,0.8) {Alice};
\node[above] at (10.5,-1.5) {Bob};
\node[above] at (14.5,1.3) {$E$};
\node[above] at (11.3,1.4) {$Q_A$};
\node[above] at (11.3,-2.4) {$Q_B$};
\end{tikzpicture}
}
\end{gathered}
\end{equation}

In a Bell/EPR experiment, the topological transformation effected in Diagram \eqref{top-trans} would be approximated by decreasing the laboratory-frame distance between each of Alice and Bob and the centrally-located source of entangled pairs toward zero, with the limit shown in the rightmost diagram representing a point source to which Alice and Bob are both immediately adjacent.  As discussed in \cite{fgm:21}, it also corresponds to a Bell/EPR experiment as described from the perspective of the entangled state $|q_Aq_B\rangle$, in which the observers Alice and Bob effectively collide at the fixed position of $|q_Aq_B\rangle$ (see \cite{fgm:21} Fig. 8a). 

\subsection{The main results}
\noindent
In leading to our main results reflected by the illustrative representation of Diagram \eqref{top-trans}, we will further clarify the setting. Here, we  explicitly assume that Alice, Bob, and $E$ are finite systems, and assume the holographic principle (HP) in geometrically-independent, background-free setting for all spaces in question. In particular, without loss of generality, the boundary $\mathscr{B}$ can be represented as an ancillary finite array of mutually non-interacting qubits, i.e. $\mathscr{B}$ has an effective Hilbert space $\mathcal{H}_{\mathscr{B}} = \otimes^N_{i=1} \mathcal{H}_{q_i}$ for some finite number $N$ of qubits $q_i$.  With this description, $\mathscr{B}$ is considered, in a strict sense, to be a distinct quantum system: it has no effect on the physics as implemented by the joint system, the self-interaction Hamiltonian $H_U$, or the $A-B$ interaction $H_{AB}$, or $H_{E}$.
The joint system $U=ABE$ thus requires $\mathcal{H}_U = 
\mathcal{H}_A \otimes \mathcal{H}_B \otimes \mathcal{H}_E$.  In the absence of any common embedding, this motivates having $\mathcal{H}_{\mathscr{B}} \cap \mathcal{H}_A \otimes \mathcal{H}_B \otimes \mathcal{H}_E = \emptyset$ \cite{fgm:22a, addazi:21, fgm:21, fgm:22b}.  The qubits $q_A$ and $q_B$ are, in this picture, components of $\mathscr{B}$ as discussed above. We can, with these assumptions, express the result of Diagram \eqref{top-trans} as a theorem:

\begin{theorem} \label{main-th}
In any LOCC protocol in which all systems are finite, and in which the boundary $\mathscr{B}$ between the communicating agents $A$ and $B$ and their joint environment $E$ is a holographic screen, as the entanglement made available to $A$ and $B$ by the quantum channel approaches pairwise monogamy, and hence the decoherence in the quantum channel detectable by $A$ or $B$ decreases to zero, the number of environmental degrees of freedom of $E$ required to implement the quantum channel becomes operationally indistinguishable, by $A$ or $B$, from zero in the limit of monogamous entanglement.
\end{theorem}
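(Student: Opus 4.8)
The plan is to reduce the statement to a claim about what is operationally accessible to $A$ and $B$ through their fixed instruments $Q_A$ and $Q_B$, and then to show that the only handle either agent has on the number of environmental degrees of freedom implementing the quantum channel is the decoherence those degrees of freedom induce. First I would fix the operational framing: by the holographic assumption the measurable data available to $A$ and $B$ are exactly the reduced states on their respective sectors of $\mathscr{B}$, read out by $Q_A$ and $Q_B$. Since $\mathcal{H}_{\mathscr{B}}$ is disjoint from $\mathcal{H}_A \otimes \mathcal{H}_B \otimes \mathcal{H}_E$ and $\mathscr{B}$ carries none of the dynamics, neither agent has direct access to the internal structure of $E$; the channel tokens are themselves degrees of freedom of $E$ that must be measured to yield state information. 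Hence any dependence of $A$'s or $B$'s observations on the number of channel-implementing degrees of freedom of $E$ must be mediated by the interaction Hamiltonian $H_E$ coupling channel to non-channel degrees of freedom.

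Next I would make precise the equivalence between ``coupled environmental degrees of freedom'' and ``observable decoherence.'' Each non-channel degree of freedom of $E$ to which the channel couples becomes entangled with the channel state, so that the reduced state $\rho_{q_A q_B}$ seen through $Q_A, Q_B$ is mixed; conversely, in the zero-coupling limit the channel state is the pure entangled state $|q_A q_B\rangle$, whose Schmidt decomposition involves only $q_A$ and $q_B$ and therefore requires no intermediary degree of freedom to be tracked. Thus the number of environmental degrees of freedom operationally implementing the channel is witnessed by $A$ and $B$ solely through the mixedness of $\rho_{q_A q_B}$, i.e. through the stochastic noise in their measurement records. This is the step that ties the counting of degrees of freedom to an operational quantity.

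I would then invoke the monogamy condition. Detecting entanglement at the Tsirelson bound requires a perfectly decoherence-free channel, so as the available entanglement approaches pairwise monogamy the mixedness of $\rho_{q_A q_B}$, and hence the decoherence detectable by $A$ or $B$, decreases to zero. By the previous step the only operational witness to the number of coupled environmental degrees of freedom vanishes in this limit, so $A$ and $B$ cannot distinguish a channel implemented by any finite number of degrees of freedom of $E$ from one implemented by none. This is realized concretely by the homotopy of Diagram \eqref{top-trans}, which holds $Q_A$, $Q_B$, the topological relationships, and $E$ itself fixed while continuously taking the number of channel-implementing degrees of freedom to zero; since the agents' observations depend on the homotopy only through the decoherence it removes, the limiting configuration is operationally identical, from the standpoint of $A$ or $B$, to the original.

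The main obstacle I anticipate is the second step: rigorously excluding any operationally accessible quantity, other than decoherence, that could distinguish a finite-degree-of-freedom channel from a zero-degree-of-freedom one. This requires using the separability requirement $|AB\rangle = |A\rangle|B\rangle$ together with the disjointness of $\mathcal{H}_{\mathscr{B}}$ from the bulk Hilbert space to argue that $H_E$ enters $A$'s and $B$'s reduced descriptions only as a decohering map, so that the purity of $\rho_{q_A q_B}$ already encodes all operationally available information about the channel's environmental support. Establishing this cleanly is what turns the intuitive homotopy picture of Diagram \eqref{top-trans} into the asserted operational indistinguishability.
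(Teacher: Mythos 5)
Your proposal is correct and follows essentially the same route as the paper's proof: operational access restricted to the screen sectors read out by $Q_A$ and $Q_B$, the Tsirelson-bound statistics as the agents' only measure of purity/decoherence, and the conclusion that in the monogamy limit no operational witness of the channel-implementing degrees of freedom survives. The step you flag as your main obstacle --- rigorously excluding any witness other than decoherence --- is precisely where the paper supplies its one piece of concrete machinery: writing $H_E = H_Q + H_{\bar{Q}} + H_{Q\bar{Q}}$ with $Q$ the channel-implementing degrees of freedom, identifying $H_{Q\bar{Q}}$ as the decoherence-generating term, and noting that as $H_{Q\bar{Q}} \rightarrow 0$ the systems $Q$ and $\bar{Q}$ decouple, so that the action of $H_{\bar{Q}}$ on the qubits of $\mathscr{B}$ outside $q_A$ can transfer no information about $Q$, hence none about $\dim(Q)$; your sketched appeal to separability and the disjointness of $\mathcal{H}_{\mathscr{B}}$ from the bulk Hilbert space is better replaced by this decoupling argument, which closes the remaining observational route directly.
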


\begin{proof}
Suppose for convenience that $A$ and $B$ each access $N/2$ qubits of $\mathscr{B}$ and that both quantum and classical channels are informationally symmetric, i.e. the relevant sectors of $\mathscr{B}$ have the same dimensions for both $A$ and $B$.  Consider the case of what $A$ can detect; the case for $B$ is no different.  Let $A$ implement the QRFs and associated computations needed to prepare qubits within the sector of $\mathscr{B}$ corresponding to $A$'s end of the quantum channel (i.e. the qubit(s) $q_A$), according to information received via the classical channel, and send messages via the classical channel given observations of the qubits $q_A$; without loss of generality, we can consider these combined computations to implement a single QRF $Q_A$ \cite{fgm:24a}. $A$ can estimate decoherence in the quantum channel by comparing her measurement statistics, together with $B$'s statistics received via the classical channel, to the relevant Tsirelson bound; we are interested in the case in which the joint measurement statistics approaches the relevant Tsirelson bound and hence the estimated decoherence approaches zero.  

 Aside from this classical measure of purity, and hence decoherence, by the Tsirelson bound, $A$ can obtain information about the number of degrees of freedom of $E$ that implement the quantum channel only via the action of the Hamiltonian $H_E$ on qubits of $\mathscr{B}$ other than $q_A$.
Let $Q$ denote the degrees of freedom of $E$ that implement the quantum channel and $\bar{Q}$ denote its complement within $E$, i.e. $Q\bar{Q} = E$.  We can then write the Hamiltonian decomposition $H_E = H_Q + H_{\bar{Q}} + H_{Q\bar{Q}}$. The interaction $H_{Q\bar{Q}}$ implements environmental decoherence within the quantum channel, or perturbation of the channel-encoded information in the notation of \cite{knill:97}. As $H_{Q\bar{Q}} \rightarrow 0$, $Q$ and $\bar{Q}$ become decoupled.  In this case, however, the action of $H_{\bar{Q}}$ on $\mathscr{B}$ can transfer no information about $Q$.  Hence $A$ can obtain no information about $Q$ by observing qubits of $\mathscr{B}$ that are not within $q_A$.  $A$ cannot, therefore, determine by observation of either $q_A$ or any other qubits on $\mathscr{B}$ that dim($Q) \neq 0$. If we regard $q_A$ and $q_B$ as components of $A$'s and $B$'s sectors of $\mathscr{B}$ as discussed above, then in the limit as $|q_Aq_B \rangle$ approaches a pure state, i.e. decoherence approaches zero, we have $Q \rightarrow 0$; hence we also state the result as: $A$ cannot determine by observation that the quantum channel is anything other than the pure state $|q_Aq_B \rangle$.
\end{proof}

\begin{corollary} \label{qecc-cor}
The codespace dimension of a perfect QECC is operationally indistinguishable from the code dimension.
\end{corollary}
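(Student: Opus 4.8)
The plan is to realize a perfect quantum error-correcting code (QECC) as precisely the zero-decoherence limit of the quantum channel analyzed in Theorem~\ref{main-th}, and then read the corollary off directly. Recall that a QECC encodes a logical space $\mathcal{H}_L$ (the code, whose dimension is the \emph{code dimension}) into a larger physical system, the image being the \emph{codespace}; the excess degrees of freedom enlarging the codespace beyond the bare logical dimension are exactly the redundancy/syndrome degrees of freedom that permit recovery from a specified error set. In the language of the theorem I would identify this redundancy with the channel-implementing degrees of freedom $Q \subseteq E$, and identify the noise it is designed to absorb with the coupling $H_{Q\bar{Q}}$.

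First I would recall the Knill--Laflamme characterization of exact (perfect) correction, $P E_i^\dagger E_j P = \alpha_{ij} P$ with $P$ the codespace projector (following \cite{knill:97}), and restate it in complementary-channel form: the code perfectly corrects its error set if and only if the environment $\bar{Q}$ acquires no information about the encoded logical state. This is exactly the condition that, in the notation of Theorem~\ref{main-th}, makes $H_{Q\bar{Q}}$ act trivially on the logical sector, so that $Q$ and $\bar{Q}$ decouple and no information about $Q$ can be transferred across the boundary $\mathscr{B}$ outside the sector $q_A$ (resp. $q_B$).

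Next I would invoke Theorem~\ref{main-th} itself. Perfect correction is the $H_{Q\bar{Q}} \to 0$ limit, which is the monogamous-entanglement, zero-decoherence limit treated there; in that limit the number of degrees of freedom of $E$ required to implement the channel is operationally indistinguishable from zero for either agent, i.e. no measurement available to $A$ or $B$ on $\mathscr{B}$ can certify $\dim(Q) \neq 0$. Transporting this through the dictionary above, the redundant degrees of freedom that distinguish codespace from code are supported entirely on $Q$ and hence become operationally invisible, so the codespace dimension and the code dimension cannot be operationally separated, which is the assertion of the corollary.

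The step I expect to be the main obstacle is making this dictionary precise rather than merely suggestive: showing that \emph{perfect QECC} in the Knill--Laflamme sense coincides with the operational zero-decoherence limit of Theorem~\ref{main-th}. The delicate point is that perfect correction is a statement about the \emph{existence} of an ideal recovery map, whereas the theorem's limit is phrased as the vanishing of the channel--noncchannel coupling. I would bridge these by noting that $A$ and $B$ access the channel only through $\mathscr{B}$ and observe only post-recovery logical statistics; the existence of perfect recovery therefore renders the syndrome/redundancy degrees of freedom operationally equivalent to the decoupled $Q$ of the theorem. Once this identification is justified, the dimensional indistinguishability is immediate.
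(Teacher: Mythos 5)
Your proposal follows the same skeleton as the paper's proof: set up a dictionary identifying the degrees of freedom that separate codespace from code with the bulk system $Q$, then invoke Theorem \ref{main-th} to conclude that neither agent can certify $\dim(Q) \neq 0$, so the excess is operationally invisible. But the paper establishes the dictionary in the opposite direction from yours, and the difference matters. It cites \cite[\S4]{fgm:24a} for the fact that any LOCC protocol already forces $E$ to implement a QECC, with codespace dimension $\dim(Q)$ and code dimension equal to the number of directly manipulable boundary qubits $q_A$, $q_B$; on that reading, ``perfect'' simply means the monogamous, zero-decoherence limit $H_{Q\bar{Q}} \to 0$ of the theorem, so nothing needs to be bridged and the corollary follows immediately.

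You instead start from a Knill--Laflamme-perfect code and try to reach the theorem's hypothesis through the complementary-channel characterization, and the step you yourself flagged as the main obstacle is a genuine gap. KL-perfection is equivalent to the statement that $\bar{Q}$ acquires no information about the \emph{logical} state; it does not imply that ``$Q$ and $\bar{Q}$ decouple,'' as your second paragraph asserts. With a nontrivial correctable error set one has $H_{Q\bar{Q}} \neq 0$, and $\bar{Q}$ can acquire syndrome information --- which error occurred, and hence the fact that a physical system $Q$ of nonzero dimension exists --- while acquiring no logical information at all. For instance, an environment that records the error location for a three-qubit bit-flip code satisfies the KL conditions yet becomes correlated with the three physical qubits of $Q$, and that record can subsequently be written onto boundary qubits outside $q_A$ by the action of $H_E$; this is exactly the detection route that the proof of Theorem \ref{main-th} must shut off, and there it is shut off only because $H_{Q\bar{Q}} \to 0$ is assumed. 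Your fallback --- that $A$ and $B$ observe only post-recovery logical statistics --- does not hold in this setting, since they observe their entire sectors of $\mathscr{B}$, not a recovered logical subsystem. The repair is to read ``perfect QECC'' as the paper does, namely as the protocol-induced code in the $H_{Q\bar{Q}} \to 0$ limit; with that reading your argument collapses onto the paper's and is correct, but the general equivalence between KL-perfection and physical decoupling that your bridge requires is false.
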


\begin{proof}
In fact, any LOCC protocol induces a QECC \cite[\S4]{fgm:24a}. More specifically, any LOCC protocol implemented jointly by Alice and Bob necessitates that $E$ implements a QECC.  The codespace dimension of this QECC is the dimension of the component $Q$ of $E$ that implements the quantum channel from Alice to Bob; the code dimension is the number of boundary qubits --- $q_A$ and $q_B$ in the above notation --- that Alice and Bob can, respectively, directly manipulate. Thus, the corollary follows directly from Theorem \ref{main-th}.
\end{proof}

In order to make the connection with ER=EPR clear, we commence with the following steps. Let us assume that Alice and Bob each implement a 3d spatial QRF, which we will denote $X_A$ and $X_B$ respectively, that allows them each to construct a representation of 3d spatial relations between identified objects from the data encoded on their accessible sectors of $\mathscr{B}$. We assume that these spatial QRFs are freely and independently chosen, and make no assumptions about the geometries they impose on Alice's and Bob's respective sectors of $\mathscr{B}$.  In particular, we make no assumption of any spatial QRF, geometry, or coordinate system shared by Alice and Bob. Hence, while they can communicate about their observed spatial relations via their classical channel, nothing requires that they can interpret each others' spatial coordinates or make any inference about their relative spatial locations.  Let us assume that they each measure a spatial location, independently and using their own spatial QRFs, at which they access their respective manipulable qubits (i.e. $q_A$ and $q_B$) of their shared quantum communication channel.  We denote these locations $x_A$ and $x_B$ respectively. The following corollary specifies the connections with EPR and ER
for each case.

\begin{corollary} \label{er-epr-cor}
In any LOCC protocol in which all systems are finite, and in which the boundary $\mathscr{B}$ between the communicating agents $A$ and $B$ and their joint environment $E$ is a holographic screen, a quantum channel implementing a shared, monogamously-entangled pair of qubits (``EPR'') is operationally indistinguishable from a topological identification of the locally-measured locations $x_A$ and $x_B$ of the qubits accessed by $A$ and $B$ respectively (``ER'').
\end{corollary}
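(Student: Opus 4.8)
The plan is to reduce the statement to Theorem~\ref{main-th} together with the topological deformation already exhibited in Diagram~\eqref{top-trans}, adding the two independently-chosen spatial QRFs $X_A$ and $X_B$ as the only new ingredient. First I would invoke Theorem~\ref{main-th}: in the limit in which the entanglement across the quantum channel reaches pairwise monogamy, the number of environmental degrees of freedom of $E$ implementing the channel is operationally indistinguishable, by either $A$ or $B$, from zero, so that the channel is operationally just the pure entangled state $|q_Aq_B\rangle$ supported on the two boundary sectors carrying $q_A$ and $q_B$. Because $H_{Q\bar{Q}}\to 0$ in this limit, neither agent can acquire any information about the interior $Q$ of the channel by acting with $H_{\bar{Q}}$ on any qubit of $\mathscr{B}$; the channel has no operationally accessible interior.

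Next I would read Diagram~\eqref{top-trans} as a homotopy of $\mathscr{B}$, fixing $E$ and the instruments realizing $Q_A$ and $Q_B$, that shrinks the channel tube traversing $E$ to zero length. In the rightmost frame the two ports of the channel, i.e. the sectors carrying $q_A$ and $q_B$, are brought into coincidence, which at the level of the effective Hilbert space $\mathcal{H}_{\mathscr{B}}=\otimes_{i=1}^N\mathcal{H}_{q_i}$ is a topological identification of the two ports. Now I would apply the spatial QRFs: $A$ assigns the port carrying $q_A$ the location $x_A$ using $X_A$, and $B$ assigns the port carrying $q_B$ the location $x_B$ using $X_B$. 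Since $X_A$ and $X_B$ are freely and independently chosen and share no common coordinate system, geometry, or embedding, the identification of the two ports is precisely a topological identification of the locally-measured points $x_A$ and $x_B$ --- an ER bridge whose throat has been contracted to a point.

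The core of the argument is then operational indistinguishability in the monogamous limit. I would argue that every datum available to $A$ --- her own measurement statistics on $q_A$, together with $B$'s statistics received over the classical channel --- is reproduced identically under the two descriptions: (``EPR'') a monogamously entangled pair connected through $E$ by a channel with no operationally accessible interior, and (``ER'') a topological identification $x_A\sim x_B$ of the two ports. By Theorem~\ref{main-th} there is no interior to probe, and by the independence of $X_A$ and $X_B$ there is no shared frame in which $A$ could measure a separation between $x_A$ and $x_B$; hence no local operation, and no exchange of classical records, can discriminate the two. Finally I would record non-traversability as a byproduct: since separability $|AB\rangle=|A\rangle|B\rangle$ is retained throughout (it is required for $A$ and $B$ to have free choice of basis and for the protocol to remain LOCC), the only signaling route is the classical channel, so the identified points cannot be used to transmit information --- the ER bridge is non-traversable.

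The main obstacle I anticipate is making the topological identification precise in the background-free setting without smuggling in an embedding geometry. Concretely, one must check that the limit extracted from Diagram~\eqref{top-trans} acts on $\mathcal{H}_{\mathscr{B}}$ as a genuine gluing of the two port sectors rather than merely as a coordinate rescaling in some ambient space, and that this gluing is QRF-independent, so that the equivalence is exact in the limit and does not depend on the geometries that $X_A$ or $X_B$ happen to impose. Establishing that the identification is well-defined purely at the level of the boundary qubit algebra, independently of $X_A$ and $X_B$, is what secures the claimed generalization beyond the geometric constructions of earlier ER~=~EPR arguments.
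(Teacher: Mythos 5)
Your proposal is correct and takes essentially the same route as the paper: the paper states Corollary~\ref{er-epr-cor} without a separate proof environment, treating it as an immediate consequence of Theorem~\ref{main-th}, the homotopy exhibited in Diagram~\eqref{top-trans}, and the preceding setup of the independent spatial QRFs $X_A$ and $X_B$ with no shared coordinate system --- precisely the three ingredients you combine. Your added remarks on non-traversability appear in the paper as well, though deferred to the Discussion section rather than folded into the corollary's justification.
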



As Corollary \ref{er-epr-cor} makes no assumption of an embedding geometry, but rather only assumes quantum and classical channels between Alice and Bob (i.e. LOCC), it substantially generalizes both the original \cite{maldecena:13} and subsequent geometric formulations of ER = EPR.

\section{Discussion}
\noindent
We derived ER=EPR as operational theorem, without assuming an overall embedding geometry --- this distinguishes our approach from the original proposal by Maldecena and Susskind.  All we require is a classical communication channel, which is a much weaker assumption. Furthermore, while it is a standard result that ER bridges are non-traversable --- see e.g. the discussion in \cite{susskind:16} --- our construction makes this particularly clear.  Suppose Alice prepares her qubit(s) $q_A$ by employing her local $z$-axis QRF, or some particular local $z$-axis QRF for each qubit in $q_A$ if multiple qubits and $z$-axis QRFs are available.  If $|q_Aq_B \rangle$ is a pure state, Alice's preparation of $q_A$ fixes $q_B$. However, Bob measures $q_B$ using his own local $z$-axis QRF(s), which bears some arbitrary relationship to Alice's.  The most Alice can do is to send Bob a classical description of her local QRFs, which requires use of a classical, i.e. causal, channel.   Alice and Bob cannot, therefore, employ their shared quantum channel for superluminal (non-causal) communication.  

Alice cannot, moreover, ``jump into'' her end of the quantum channel and ``meet with'' Bob, regardless of what Bob does.  Alice is a physical system, and so can be considered a collection of qubits in some joint state.  ``Jumping into'' the channel requires that dim(Alice) $<$ dim($q_A$).  In this case, Alice herself is the QRF in which $q_A$ is (partially) prepared. As Alice cannot determine her own complete state by observation \cite{fgl:24}, she cannot, even in principle, send Bob a classical message that would allow him to reconstruct her state from a measurement of $q_B$. No firewall is required for this negative outcome, and ``exotic'' modifications of the channel do not effect it. Hence again, our result generalizes those obtained from geometric formulations of ER = EPR. Furthermore, at least for the EPR problem, our operational approach employing QRFs, appears to be fully compatible with how Niels Bohr once envisaged the potential importance of introducing their mechanism 
\cite{dickson:04}.



\end{document}